\bmdefine{\bz}{z}
\bmdefine{\bTheta}{\Theta}
\def\b1{{1\!\!1}}
\def\cA{{\ca A}}
\def\sH{{\mathsf H}}
\def\bC{{\mathbb C}}           
\def\bR{{\mathbb R}}
\def\bP{{\mathbb P}}
\def\mM{\mathcal M}
\def\vz{{\bf z}}
\def\vt{{\bf \Theta}}
\def\beq{\begin{eqnarray}}
\def\eeq{\end{eqnarray}}
\newcommand{\ca}[1]{{\cal #1}}         
\newcommand{\ket}[1]{|{#1}\rangle}
\newtheoremstyle{thm}
{12pt}
{12pt}
{\itshape}
{}
{\itshape\bfseries}
{}
{1em}
{}
\theoremstyle{thm}
\newtheorem{theorem}{Theorem}
\newtheorem{proposition}[theorem]{Proposition}
\newtheorem{definition}[theorem]{Definition}
\newtheorem{ex}[theorem]{Example}
\def\hs{\hspace{0.3cm}}
\begin{document}

\hfill{\sl  } 
\par 
\bigskip 
\par 
\rm


\vspace{-2cm}
\begin{center}
\par
\bigskip
\large
\noindent
{\bf Quantum Annealing Learning Search \\for solving QUBO problems}
\bigskip
\par
\rm
\normalsize 

\date{\today}

\noindent  {\bf Davide Pastorello$^\star$} and 
\noindent{\bf Enrico Blanzieri$^\dagger$\footnote{The authors contributed equally.}}

\vspace{0.5cm}

\noindent
$^\star$  \emph{Department of  Mathematics, University of Trento \\ 
$ $\hs Trento Institute for Fundamental Physics and Applications}\\ 
$ $\hs via Sommarive 14, 38123 Povo (Trento), Italy\\
$ $\hs d.pastorello@unitn.it

\vspace{0.5cm}

\noindent
$^\dagger$ \emph{Department of Engineering and Computer Science, University of Trento}\\
$ $\hs via Sommarive 9, 38123 Povo (Trento), Italy\\
$ $\hs enrico.blanzieri@unitn.it 
\smallskip
\end{center}
 \normalsize

\par

\rm\normalsize


\rm\normalsize


\abstract{\noindent In this paper we present a novel strategy to solve optimization problems within a hybrid quantum-classical scheme based on quantum annealing, with a particular focus on QUBO problems. The proposed algorithm is based on an iterative structure where  the representation of an objective function into the annealer architecture is learned and already visited solutions are penalized by a tabu-inspired search. The result is a heuristic search equipped with a learning mechanism to improve the en- coding of the problem into the quantum architecture. We prove the convergence of the algorithm to a global optimum in the case of general QUBO problems. Our technique is an alternative to the direct reduction of a given optimization problem into the sparse annealer graph.}

\section{Introduction}

Quantum Annealing (QA) is a type of heuristic search to solve optimization problems \cite{r1, r2, aqc}. The solution to a given problem corresponds to the ground state of a quantum system with total energy described by a \emph{problem Hamiltonian} $H_P$ that is a self-adjoint operator on the Hilbert space where the considered quantum system is described. The main idea is to set a time evolution in the interval $[0,T]$ given by a time-dependent Hamiltonian $\{H(t)\}_{t\in[0,T]}$ such that $H(T)=H_P$ providing that the quantum system ends in the ground state of $H_P$ with high probability. In QA the time-dependent Hamiltonian of the considered quantum system (typically a $n$-qubit system) is:
\beq
H(t)=H_P+\gamma(t) H_D,
\eeq 
where $\gamma:[0,T]\rightarrow\bR$ is a smooth monotone function decreasing to 0. $H_D$ is the so-called \emph{disorder Hamiltonian} which does not commute with $H_P$, and it introduces the quantum fluctuations to allow the heuristic search escaping local minima \cite{aqc}. The annealing procedure is implemented decreasing the kinetic contribution of $H_D$. 
The difference between QA and Simulated Annealing (SA) is that QA exploits the \emph{tunnel effect} to escape local minima instead of \emph{thermal hill-climbing}.  
\\
QA can be physically realized considering a quantum spin glass that is a network of qubits arranged on the vertices of a graph whose edges represent the interactions between neighbors. In particular the D-Wave quantum annealer implements a graph with the \emph{Chimera topology} (or the more recent \emph{Pegasus topology}) \cite{headquarters2018d,d-wave} and a general problem Hamiltonian $H_\vt$, as defined below in (\ref{HP}), while a transverse external field implements the disorder Hamiltonian $H_D$ as defined in (\ref{HD}). 
\\
The general optimization problem that the D-Wave's machine solves is expressed in terms of the \emph{Ising model}, i.e. its solution ${\bf z}_0\in\{-1,1\}^n$ is given by the minimization of the following cost function:
\beq\label{ising}
\mathsf E(\vt, \vz)=\sum_{i\in V} \theta _i z_i +\sum_{(i,j)\in E} \theta_{ij}z_i z_j\hs\hs\vz=(z_1,...,z_n)\in\{-1,1\}^n
\eeq
where $(V,E)$, with $|V|=n$, is a subgraph of the architecture employing Chimera (Pegasus) topology, the binary variables $z_i\in\{-1,1\}$ are physically realized by the outcomes of measurements on the qubits located in the vertices and $\vt$ is the matrix of the coefficients of $\mathsf E$, called \emph{weights}, defined as:
\beq \label{w}
\vt_{ij}:=\left\{
\begin{array}{cc}
\theta_i & i=j\\
\theta_{ij} & \hs(i,j)\in E\\
0 & \hs(i,j)\not\in E .
\end{array}\right.
\eeq
The cost function (\ref{ising}) is realized by means of the Hamiltonian of a spin glass:
\beq\label{HP}
H_\vt=\sum_{i\in V} \theta _i \sigma_z^{(i)} +\sum_{(i,j)\in E} \theta_{ij}\sigma_z^{(i)} \sigma_z^{(j)}.
\eeq

\noindent
$H_\vt$ is an operator on the Hilbert space $\sH=(\bC^2)^{\otimes n}$ with $\sigma_z^{(i)}:=I_2\otimes\cdots \otimes \sigma_z\otimes\cdots\otimes I_2$ where $I_2$ is the $2\times 2$ identity matrix and the Pauli matrix
$$
\sigma_z=\left(
\begin{array}{cc}
1 & \,\,\,\,0\\
0 & -1
\end{array}
\right)
$$
is the $i$-th tensor factor.
\\
A transverse field gives the kinetic Hamiltonian to realize the quantum fluctuations:
\beq\label{HD}
H_D=\sum_{i\in V} \sigma_x^{(i)},
\eeq 
where we adopt the same notation of (\ref{HP}) and $\sigma_x$ is the Pauli matrix:
 $$
\sigma_x=\left(
\begin{array}{cc}
0 & \,1\\
1 & 0
\end{array}
\right).
$$
\\
Given a problem with a binary encoding, the annealer is initialized by a suitable choice of the weights $\vt$ and a definition of a mapping of the binary variables in the corresponding qubits.
The satisfying assignment of the considered problem corresponds to the ground state $\ket{\vz_0}$ of the system whose energy eigenvalue is: 
\beq\label{argmin_zE}
\vz_0=\mbox{arg}\!\!\!\!\!\!\!\!\!\!\!\min_{\vz\in\{-1,1\}^{|V|}\,\,\,\,\,\,} \!\!\!\!\!\!\!\mathsf E(\vt,\vz).
\eeq
If the problem presents distinct solutions then $H_P$ has a degenerate ground state such that the quantum annealing process and a subsequent measurement produce one of the possible solutions. In the case of the real machine we have also to take into account the repetitions of the annealing and measurement processes. In fact the annealer returns a sample of solutions which may or not may contain an optimal one, so we also consider the behavior of quantum annealer as an estimation of $\vz_0$:
\beq\label{hatargmin_zE}
\hat\vz_0=\!\!\!\!\!\!\!\widehat{\,\,\,\,\,\,\,\,\,\mbox{arg}\!\!\!\!\!\!\!\!\!\!\!\min_{\vz\in\{-1,1\}^{|V|}\,\,\,\,\,\,}} \!\!\!\!\!\!\!\mathsf E(\vt,\vz)
\eeq
where the right-hand term represents a process that includes the runs of the annealer and the computation of an estimate from their results. For our purpose this estimation could be either the selection of one of the values or a statistic on the annealer outputs, for example the mode. Details about QA, its realization by means of spin glasses and its application to solve optimization problems can be found in literature \cite{r1,r2,r3}. 
\\
In this work we propose a procedure based on iterated modifications of the Hamiltonian (\ref{HP}) to implement a \emph{tabu-inspired search}  to solve optimization problems that cannot be directly mapped in the architecture of a quantum annealer. In particular we will focus on the \emph{quadratic unconstrained binary optimization} (QUBO) problem that is defined by:
\beq
\mbox{Minimize the function} \hs f(\vz)=\vz^T Q \vz \hs\hs \vz\in\{-1,1\}^n, 
\eeq
where $Q$ is a $n\times n$ symmetric matrix of real coefficients. The qubit network of a quantum annealer can represent a QUBO problem at the price of restricting the QUBO structure into a graph with less vertices and edges (imposed by the hardware architecture), such a mapping is the subject of intense research activity (e.g. \cite{Roberto}). Our approach is different: Implementing an iterative representation of the objective function  and a tabu search, we can solve a QUBO problem by means of an original utilization of a quantum annealer without the direct representation of the problem into the annealer graph.   
\\
Let us recall that \emph{tabu search} is a kind of local search where worse candidate solutions can be accepted with nonzero probability and already visited solutions are penalized \cite{g1, g2}. 
In the next section we describe our general procedure to implement a tabu-inspired search iterating the initialization of a quantum annealer. 
The considered scheme produces the candidate solutions for the evaluation of the objective function alternating annealing and re-initialization of the weights in order to penalize already-visited solutions. There is no a list of forbidden moves like in standard tabu search but rather the implementation of a prescription to discourage certain moves and to encourage others towards a better representation of the problem into the quantum architecture. 
In Section \ref{QCTS} we propose the algorithms: Algorithm 1 is a general scheme of learning search on a quantum annealer and Algorithm 2 is a specific version to solve QUBO problems for which we give a convergence proof (Section \ref{Conv}).

\section{Learning search on a quantum annealer}

In order to introduce the main idea of a learning strategy for local search in the context of quantum annealing, let us consider a simple example of two qubits where $(V,E)=(\{v_1, v_2\}, \{e\})$ and the cost function (energy) is given by:
\beq
\mathsf E(\vt, \vz)=\theta_1 z_1+\theta_2 z_2+ \theta_{12} z_1 z_2.
\eeq
Moreover, let us also consider an objective function $f:\{-1,1\}^2\rightarrow\bR$ to minimize (not necessary a QUBO problem for now). The heuristic search starts with the random initialization of the weights $\vt_1$ and $\vt_2$, then two annealing processes produce the first two candidate solutions: $\vz_g$ and $\vz_b$ with $f(\vz_g)\leq f(\vz_b)$, i.e. $\vz_g$ is the current solution. The basic idea is to generate  a set of new weights $\vt'$ for a subsequent run of the quantum annealer such that the \emph{bad candidate} $\vz_b=(z_{b1},z_{b2},z_{b3})$, i.e. the already discarded solution, is energetically penalized:
\beq\label{w1}\left\{\begin{array}{c}
\!\!\!\!\!\!\!\!\!\!\theta_1'=z_{b1}\\ \!\!\!\!\!\!\!\!\!\!\theta'_2=z_{b2}\\ \theta'_{12}=z_{b1}z_{b2} .
\end{array}\right. 
\eeq
Once the weights (\ref{w1}) are initialized a new candidate solution is produced. Let us assume that the corresponding value of $f$ is different from $f(\vz_g)$ then we have two bad candidates $\vz_b^{(1)}$ and $\vz_b^{(2)}$. In this case the weight initialization for the next run of the annealer will be:
\beq\label{w2}\left\{\begin{array}{c}
\!\!\!\!\!\!\!\!\!\!\!\!\!\!\!\!\!\theta_1''=z_{b1}^{(1)}+z_{b1}^{(2)}\\ \!\!\!\!\!\!\!\!\!\!\!\!\!\!\!\!\!\!\theta''_2=z_{b2}^{(1)}+z_{b2}^{(2)}\\ \,\,\theta''_{12}=z_{b1}^{(1)}z_{b2}^{(1)}+z_{b1}^{(2)}z_{b2}^{(2)}.
\end{array}\right.
\eeq
\begin{ex}[Toy model for two qubits]
After the two initial annealing processes suppose the outputs be  $\vz_g=(1,1)$ and $\vz_b=(1,-1)$ then $\mathsf E(\vt', \vz)=z_1-z_2-z_1z_2$ with:
$$
\emph{arg}\!\!\!\!\!\!\!\!\!\min_{\vz\in\{-1,1\}^{2}\,\,\,\,\,\,} \!\!\!\!\!\!\!\mathsf E(\vt',\vz)=\{(1,1), (-1,1), (-1,-1)\}
$$
then assume to perform an annealing process producing the result $(-1,1)$ satisfying $f(-1,1)<f(1,1)$, so we have two bad candidates: $\vz_b^{(1)}=(1,-1)$ and $\vz_b^{(2)}=(1,1)$ and a new current solution $\vz_g=(-1,1)$. The weight initialization for the next annealing process will be:
$$\left\{\begin{array}{c}
\!\theta_1''=2\\ \!\theta''_2=0\\ \,\,\theta''_{12}=0
\end{array}\right.
$$
so $\mathsf E(\vt'', \vz)=z_1$ for which $\emph{arg}\!\min \mathsf E(\vt'',\vz)=\{(-1,1), (-1,-1)\}$. If $\vz^*=(-1,-1)$ is the global optimum for the objective function $f$ then the last step of this toy learning search is the initialization of the cost function:
$$\mathsf E(\vt''', \vz)=z_1+z_2-z_1z_2$$
attaining the global minimum in $\vz^*$.
\end{ex}

\noindent
Since the function $\mathsf E$ does not encode any information about $f$ the search sketched in the example above is obviously equivalent to an exhaustive search at this stage. \\Let us give a first generalization of the argument above considering $n$ qubits. Suppose to have collected a list $\{\vz^{(\alpha)}  \}_{\alpha\in I}\subset \{-1,1\}^n$ of discarded candidates, i.e. there is a current solution $\vz_g$ such that $f(\vz^{(\alpha)})\geq f(\vz_g)$ for any $\alpha\in I$. These elements can be penalized by the following choice of the weights as a direct generalization of (\ref{w1}) and (\ref{w2}):
\beq\label{tabu1}
\left\{\begin{array}{cc}
\!\!\!\!\!\!\theta_i=B\sum_{\alpha\in I} z_i^{(\alpha)} &\\ 
 & \\
 \,\,\theta_{ij}=C\sum_{\alpha\in I} z_i^{(\alpha)} z_j^{(\alpha)} &  (i,j)\in E
\end{array}\right.
\eeq
\noindent
where $B$ and $C$ are positive constants. \\ Let us define the \emph{tabu matrix} as the $n\times n$ symmetric matrix with integer elements  given by (\ref{tabu1}):
\beq\label{tabumatrix}
S:=\sum_{\alpha\in I}[ \vz^{(\alpha)}\otimes\vz^{(\alpha)}-I_n +\mbox{diag}(\vz^{(\alpha)})].
\eeq
A tabu procedure updates the matrix $S$ when the number of rejected candidates increases  (details in Section \ref{QCTS}), so we consider its equivalent definition by recursion:
\beq\label{tabumatrix1}
S:=S+\vz\otimes\vz-I_n +\mbox{diag}(\vz),
\eeq
with $S$ initialized as the null matrix and $\vz$ ranging in the list of bad candidates.\\
In order to encode information of the objective function $f$ into the weights of the energy function $\mathsf E$ we introduce a map $\mu:\mathcal O\rightarrow\mathcal E$ from the set of objective functions $\mathcal O:=\{f:\{-1,1\}^n\rightarrow\bR\}$ to the set of energy functions:
\beq
\mathcal E:=\{\mathsf E(\vt, \cdot) \,|\, \vt\in\mM_E \},
\eeq
where $\mM_E$ denotes the set of matrices of the form (\ref{w}).
The general form of $\mu$ that we consider is the following:
\beq\label{tab}
\mu[f](\vz):=\mathsf E(\vt[f], \pi(\vz)),
\eeq
where $\vt[f]$ is the matrix of weights associated to $f$ according to a certain law and $\pi$ is a permutation of the variables changing the binary encoding into the qubits of the annealer. 
\\
We are interested in the implementation of the learning search by means of the tabu matrix $S$ (\ref{tabumatrix}) which encodes the energetic penalization of the rejected candidate solutions. To this end we need to clarify how the matrix $S$ and the permutation $\pi$ interact.
Let $S$ be the tabu matrix generated by the bad candidates $\{\vz^{(\alpha)}\}_{\alpha\in I}$ according to (\ref{tabumatrix}) and $S_\pi$ be the tabu matrix generated by $\{\pi(\vz^{(\alpha)})\}_{\alpha\in I}$, if $P_\pi$ is the matrix of the permutation $\pi$ then $S_\pi=P_\pi^T SP_\pi$. So the tabu-implementing encoding of $f$ into the cost function can be defined by:
\beq\label{tabumu}
\mu[f](\vz):=\mathsf E(\vt[f]+\lambda\, S_\pi\circ \cA, \pi(\vz)),
\eeq
where $\cA$ is the adjacency matrix of the annealer graph, the Hadamard product $\circ$ maps $S_\pi$ into a matrix of $\mM_E$, and the factor $\lambda>0$ balances the contribution of the tabu matrix and the weights $\vt[f]$ that carry information about $f$. 
 The basic idea of definition (\ref{tabumu}) is that the additive contribution of the tabu matrix energetically penalizes the already-rejected candidate solutions.
The goal of $\lambda$ is to avoid that the action of the tabu matrix $S$ obscures the information about $f$, in general $\lambda$ is a decreasing function of the number of bad candidates penalized by $S$.
\\
Once the quantum annealing process has produced an estimate of the global minimum of (\ref{tabumu}), one must apply $\pi^{-1}$ on the result to read the solution in the original variable encoding. During the iterative search a sequence of $\mu$-mappings is generated to explore the structure of the objective function. We use a permutation of the variables in (\ref{tab}) and (\ref{tabumu}) for sake of simplicity but any invertible transformation on $\{-1,1\}^n$ is eligible. This class includes the permutations composed with local changes of sign, however it is not clear which advantage these further transformations could give within the considered scheme.
\\
In the specific case of QUBO problems, the objective function is a quadratic form $f:\{-1,1\}^n\rightarrow \bR$ with $f(\vz)=\vz^T Q \vz$ where $Q$ is a real symmetric $n\times n$ matrix.  
The tabu strategy and the evolving representation of the problem in the annealer can be summarized as follows: The matrix $Q$ representing the objective quadratic form is \emph{piecewise} mapped into the annealer architecture and \emph{deformed} by means of the tabu matrix to energetically penalize the spin configurations corresponding to solutions that are far from the optimum. Note that in general the current representation of the problem is \emph{not} a subproblem. More precisely let us define the mapping $\mu$ in the QUBO case:
\beq\label{encod0}
\mu[f_Q]:= \mathsf E(P^T Q P\circ \cA ,P\vz),
\eeq 
where $P$ is a permutation matrix of order $n$ and $P^T$ is its transpose. Thus the action of $\mu$ is realized by mapping some elements of $Q$ into the weights, the mapped elements are selected by $P$.
The tabu-implementing encoding of $Q$ into the annealer induced by the permutation matrix $P$ turns out to be:
\beq\label{encod}
\,\mu[f_Q](\vz)=\mathsf E( P^T Q P \circ \cA + \lambda\,P^T S P\circ \cA, P\vz)=
\eeq 
$$\hs=\mathsf E( P^T (Q+\lambda S) P \circ \cA, P\vz).\hs\hs\hs$$

\noindent
In the next section, the form (\ref{tabumu}) of the tabu-implementing encoding is included in a general scheme for quantum annealing learning search, whose QUBO version uses (\ref{encod}) instead.

\section{Hybrid quantum-classical learning search}\label{QCTS}

The idea of penalizing some solutions in the annealer initialization can be exploited for defining quantum-classical hybrid schemes for search algorithms with a learning mechanism of the problem encoding into the quantum architecture. Algorithm~\ref{Schema} shows their general form. The goal is to find the minimum of $f$ with a quantum annealer whose adjacency matrix is $\cA$ and its energy function is $\mathsf E(\bTheta, \vz)$.

\vspace{1cm}

 \begin{algorithm}[ht!]
\small
\KwData{Annealer adjacency matrix $\cA$ of order $n$}
\KwIn{$f(\vz)$ to be minimized with respect to $\vz\in\{-1,1\}^n$, annealer energy function $\mathsf E(\bTheta, \vz)$}
\KwResult{$\vz^*$ vector minimum of $f$} 
\SetKwProg{ffun}{function}{:}{}
randomly generate two maps $\mu_1$ and $\mu_2$ of $f$: $\mu_1[f](\vz)$ := $\mathsf E(\bTheta_1[f],\pi_1(\vz))\equiv$ $\equiv\mathsf E(\bTheta_1,\pi_1(\vz))$ and $\mu_2[f](\vz)$ := $\mathsf E(\bTheta_2[f],\pi_2(\vz))\equiv\mathsf E(\bTheta_2,\pi_2(\vz))$
where $\pi_1$ and $\pi_2$ are permutations of the variables\;
find $\vz_1$ and $\vz_2$ whose images $\pi_1(\vz_1)$,   $\pi_2(\vz_2)$ estimate the minima of $\mathsf E(\bTheta_1,\cdot)$  and $\mathsf E(\bTheta_2,\cdot)$ in the annealer\;
evaluate  $f(\vz_1)$ and $f(\vz_2)$\;
use the best to initialize $\vz^*$\  and the map $\mu^*$\;
use the worst to initialize $\vz'$\;
initialize the tabu matrix: $S \gets \vz' \otimes \vz'-I_n+\mbox{diag}(\vz')$; initialize the balancing factor $\lambda$\;
\Repeat{ convergence or maximum number of iterations  conditions } {
from $\mu^*$ and $S$ generate $\vt[f]$ and $\pi$:\\ apply the mapping $\mu[f](\vz)$ := $\mathsf E(\bTheta[f]+\lambda S_{\pi}\circ \cA,\pi(\vz)) \equiv\mathsf E(\bTheta,\pi(\vz))$ with $S_{\pi}=P_\pi^T SP_\pi$ where $P_\pi$ is the matrix of the permutation $\pi$\;
find $\vz'$ whose image $\pi(\vz')$ estimate the minimum of $\mathsf E(\bTheta,\cdot)$ in the annealer\;
  \If{$\vz' \not= \vz^*$} {
  evaluate $f(\vz')$\;
    \If{$\vz'$ is better of $\vz^*$}{$swap(\vz',\vz^*)$;  $\mu^* \gets \mu$\;} 
    use $\vz'$ to update the tabu matrix $S$: $S \gets S+\vz' \otimes \vz'-I_n+diag(\vz')$\; update the balancing factor $\lambda$\;
    }
    }
\Return $\vz^*$\;

\vspace{1.0cm}

\caption{Quantum Annealing Learning Search General Scheme}
\label{Schema}
\end{algorithm}
\FloatBarrier

\begin{algorithm}[h]
\footnotesize

\vspace{-2cm}

\KwData{Matrix $Q$ of order $n$ encoding a QUBO problem,  annealer adjacency matrix $\cA$ of order $n$}
\KwIn{Energy function of the annealer $\mathsf E(\bTheta, \vz)$, permutation modification function $g(P,p)$, minimum probability $0<p_\delta<0.5$ of permutation modification, probability decreasing rate $\eta>0$, candidate perturbation probability $q>0$, number $N$ of iterations at constant $p$, initial balancing factor $\lambda_0>0$, number of annealer runs $k\geq 1$, termination parameters $i_{max}$, $N_{max}$, $d_{min}$ }
\KwResult{$\vz^*$ vector with $n$ elements in $\{-1,1\}$ solution of the QUBO problem} 
\SetKwProg{ffun}{function}{:}{}
\ffun{$f_Q$($x$)}{     \KwRet 
 $x^TQx$ \;}
 $P \gets I_n$\;
 $p \gets 1$\;
 $P_1 \gets g(P,1); P_2 \gets g(P,1)$; \tcp*[h]{\it call Algorithm~\ref{Alg2} for generating two permutation matrices}\\
 $\bTheta_1 \gets P_1^TQP_1\circ \cA$;  $\bTheta_2 \gets P_2^TQP_2\circ \cA$; \tcp*[h]{\it weights initialization according to (\ref{encoding})}\\
 run the annealer $k$ times with weights $\bTheta_1$ and $\bTheta_2$\\
 $\vz_1 \gets P_1^T \widehat{\mbox{argmin}}_\vz(\mathsf E(\bTheta_1,\vz))$;  $\vz_2 \gets P_2^T \widehat{\mbox{argmin}}_\vz(\mathsf E(\bTheta_2,\vz))$; \tcp*[h]{\it estimate energy argmin, $P_1^T$ and $P_2^T$ map back the variables}\\
$ f_1 \gets f_Q(\vz_1); f_2 \gets f_Q(\vz_2)$ ; \tcp*[h]{\it evaluate $f_Q$}\\
\tcp*[h]{\it use the best to initialize $\vz^*$ and $P^*$; use the worst to initialize $\vz'$ }\\
  \eIf{$f_1<f_2$}{$\vz^* \gets \vz_1$; $f^* \gets f_1$; $P^* \gets P_1$ $\vz' \gets \vz_2$\;}{$\vz^* \gets \vz_2$ ; $f^* \gets f_2;$ $P^* \gets P_2$; $\vz' \gets \vz_1$\;}
\lIf( \tcp*[h]{\it use $\vz'$ to initialize the tabu matrix $S$}){$f_1\not =f_2$}{$S \gets \vz' \otimes \vz'-I_n+diag(\vz')$}
\lElse( \tcp*[h]{\it otherwise set all the elements of $S$ to zero}){$S \gets 0$}
$e \gets 0$; $d \gets 0$; $i \gets 0$; $\lambda\gets\lambda_0$\;
\Repeat{$i=i_{max}$ or ($e+d \geq N_{max}$ and $d<d_{min}$) }{
$Q' \gets Q+\lambda S$; \tcp*[h]{\it scale and add the tabu matrix   } \\ 
\If{$N$ divides $i$}{$p \gets p-(p-p_\delta)\eta$;}
$P \gets g(P^*,p)$; \tcp*[h]{\it call Algorithm~\ref{Alg2} that returns a modified permutation}\\
$\bTheta' \gets P^TQ'P\circ \cA$; \tcp*[h]{\it weights initialization according to (\ref{encodingq}) }\\
run the annealer $k$ times with weights $\bTheta'$\\ 
$\vz' \gets P^T \widehat{\mbox{argmin}}_\vz(\mathsf E(\bTheta',\vz))$; \tcp*[h]{\it estimate energy argmin, $P^T$ maps back the variables}\\
with probability $q\hs$ $\vz' \gets h(\vz',p)$; \tcp*[h]{\it possibly perturb the candidate by calling Algorithm~\ref{Alg4}}\\  
 \eIf{$\vz' \not= \vz^*$} {
$f' \gets f_Q(\vz')$; \tcp*[h]{\it evaluate $f_Q$}\\ 
 \eIf{$f'<f^*$}{$swap(\vz',\vz^*)$; $f^* \gets f'$; $P^* \gets P$; $e \gets 0$; $d \gets 0$; \tcp*[h]{\it $\vz'$ is better}\\ 
 $S \gets S+\vz' \otimes \vz'-I_n+diag(\vz')$; \tcp*[h]{\it use $\vz'$ to update the tabu matrix $S$}}
 { $d \gets d+1$\;
  with probability $p^{(f'-f^*)}$ $swap(\vz',\vz^*)$; $f^* \gets f'$; $P^* \gets P$; $e \gets 0$ \; 
 }
   update the balancing factor $\lambda$ with $\lambda\leq\lambda_0$\;
 }
 {$e \gets e+1$\;}
 $i \gets i+1$\;
 }
\Return $\vz^*$\;

\vspace{0.3cm}

\caption{Quantum Annealing Learning Search for QUBO problems.}
\label{Alg}
\end{algorithm}
\FloatBarrier

\noindent
The key point of the learning scheme is the definition and updating of the tabu matrix $S$. Initially two candidate solutions are generated (Algorithm~\ref{Schema}, lines 1-6) and while the best is used for a usual initialization (Algorithm~\ref{Schema}, line 4) the worst is used for initializing the tabu matrix (Algorithm~\ref{Schema}, lines 5-6). In the iterative part of the scheme (Algorithm~\ref{Schema}, lines 7-18) new candidate solutions are repeatedly generated and tested, the candidates that are discarded are used to upgrade the tabu matrix (Algorithm~\ref{Schema}, line 16). 
This permits to generate successive encodings $\mu:\mathcal O\rightarrow\mathcal E$ of the objective function into the energy function, and consequently solutions (Algorithm~\ref{Schema}, lines 8-9), that take into account: 1) the mapping corresponding to the best solution so far and  2) the tabu matrix scaled by a balancing factor $\lambda$. In general $\lambda$ is dynamically updated (Algorithm~\ref{Schema}, line 17) so it can take into account the number of rejected solutions and/or the actual values of the matrix $S$. The balancing factor balances the contributions of the tabu matrix and the objective function (Algorithm~\ref{Schema}, line 9).

\noindent
Algorithm~\ref{Alg} instantiates the general scheme of quantum annealing learning search in the QUBO case, and it takes the structure of a simulated annealing with quantum annealer calls. Differently from Algorithm~\ref{Schema}, the \emph{encodings} are here completely specified by randomly generated permutations. The \emph{candidate solution generation} combines the tabu strategy of Algorithm~\ref{Schema} with a random perturbation of the candidate. The \emph{candidate solution acceptance} allows for random acceptance of suboptimal candidates and the \emph{termination conditions} are explicitly given. The details of Algorithm~\ref{Alg} are presented below. 
\\
\textbf{Encodings.} In Algorithm~\ref{Alg} the tabu-implementing encodings are generated with permutations according to formulas (\ref{encod0}) and (\ref{encod}).   
To this end we call an additional function $g(P,pr)$ (Algorithm~\ref{Alg2}) that modifies a permutation by considering for shuffling each element with probability $p$. The function $g(P,pr)$ produces the permutations that induce the mappings of the objective function into the annealer architecture (Algorithm~\ref{Alg} lines 5 and 23). 
The mappings to the parameters of the annealer take into account the actual annealer topology represented by the graph matrix $\cA$ (Algorithm~\ref{Alg}, lines 6 and 24). 
The permutations are also used to remap the solutions found by the annealer to the initial space of solutions of the problem (Algorithm~\ref{Alg}, lines 8 and 26) and to represent the best map $P^*$.
As an effect of Algorithm~\ref{Alg2}, if $p=1$ the resulting permutation is purely random; with $0<p<1$ the permutation resembles partially the initial one; if $p=0$ the output permutation would coincide with the input one. However, this last circumstance does not occur because periodically  the probability of an element to be shuffled decreases to the value $0<p_\delta<0.5$ with rate $\eta$ (Algorithm~\ref{Alg}, lines 20-22).

\noindent
\textbf{Candidate solution generation.} The matrix $Q$ of the QUBO problem interacts additively with the tabu matrix $S$ scaled by a balancing factor $\lambda$ (Algorithm~\ref{Alg}, line 19) in order to guide the search of the solution by quantum annealing with an energy profile consistent with (\ref{encod}). A crucial effect of this summation is that, in its iterative part, the algorithm does not search anymore for solutions of subproblems as done instead in its initialization. In fact, $S$ contains information about the bad candidates (Algorithm~\ref{Alg}, lines 15 and 32) whose objective function values are greater than $f^*$. 
The balancing factor $\lambda$, initially set to $\lambda_0$, is decreased  (Algorithm~\ref{Alg}, line 37), for example a sensible choice could be $\lambda\leftarrow \min\left\{\lambda_0,\frac{\lambda_0}{2+i-e}\right\}$ where $2+i-e$ is the current number of rejected solutions. 
The candidate solution found by the annealer (Algorithm~\ref{Alg}, line 26) is then perturbed with probability $q$ (Algorithm~\ref{Alg}, line 27) by the function $h(\vz', p)$ (Algorithm~\ref{Alg4}) in order to guarantee the convergence (see Section \ref{Conv}).

\begin{algorithm}[h]
\KwIn{A permutation matrix $P$ of order $n$ ($p_i$ row vector of $P$), a probability $pr$ of an element to be considered for shuffling}
\KwResult{A permutation Matrix $P'$}
associative map $m$\;
\ForEach{$i  \gets 1 \dots n$}
 { with probability $pr$ $m[i] \gets i$; \tcp*[h] {\it select the elements to be shuffled}\\
 }
shuffle map $m$\;
\ForEach{$i  \gets 1 \dots n$} 
 {\eIf { $i \in m$} 
       {${p_i}' \gets p_{m[i]}$}
      {${p_i}' \gets p_i$}
  }
return P';
\caption{Permutation Matrix modification function $g(P, pr)$}
\label{Alg2}
\end{algorithm}
   \FloatBarrier

\noindent
\textbf{Candidate solution acceptance.} At line 35 of Algorithm~\ref{Alg} a suboptimal solution is accepted with probability $p^{(f'-f^*)}$. By direct comparison with the common rule of acceptation in simulated annealing, namely $p^{(f'-f^*)}=e^{-\frac{(f'-f^*)}{T}}$, it is possible to interpret the role of $p$ by observing that $p=e^{-1/T}$. Namely, in terms of simulated annealing, the parameter $p$ is the probability to accept a suboptimal solution at temperature $T=-{\ln}^{-1}p$ when the objective function worses of a unity.

\begin{algorithm}[h]
\KwIn{A vector $\vz\in\{-1,1\}^n$, a probability $pr$ of a component to be switched}
\KwResult{A vector in $\{-1,1\}^n$}
\ForEach{$i  \gets 1 \dots n$}
 { with probability $pr$ $z_i \gets -z_i$; \tcp*[h] {\it switch the component}\\
 }
return \vz;
\caption{Candidate perturbation function $h(\vz, pr)$}
\label{Alg4}
\end{algorithm}

\noindent
\textbf{Termination conditions.}
The cycle of Algorithm~\ref{Alg} ends with convergence or when the maximum number of iterations is reached.   
Line 17 of Algorithm~\ref{Alg} defines three counters for controlling the end of the cycle: $e$ counts the number of consecutive times that a current solution is generated (Algorithm~\ref{Alg}, line 39); $d$ counts the number of times the current solution and the new solution differ and the current is better (Algorithm~\ref{Alg}, line 34); finally, the variable $i$ simply counts the number of iterations. The counters are compared against input parameters in the termination condition.
\\
Let us illustrate the form that the mappings (\ref{tab}) and (\ref{tabumu}), included in Algorithm~\ref{Schema}, have in Algorithm~\ref{Alg} where they are induced by permutations as in (\ref{encod0}) and (\ref{encod}).
 More precisely, in Algorithm~\ref{Alg} the maps $\mu_1$, $\mu_2$,  introduced in the general scheme (Algorithm~\ref{Schema}, line 1), are: 
\begin{equation}\label{encoding}
	\begin{array}{ll}
	  j=1,2 \hs\hs\hs	\mu_j[f](\vz)=\mathsf E(\vt_j, \pi(\vz)) =\mathsf E(P_j^TQP_j\circ \cA , P_j\vz ).\\
		\end{array}
\end{equation}

\noindent
Moreover, in Algorithm 2 the map $\mu$ (generated at line 8 in the general scheme of Algorithm 1) is:
\begin{equation}\label{encodingq}
\mu[f](\vz)=	\mathsf E(\vt, \pi(\vz))=	\mathsf E(P^T(Q+\lambda S)P\circ \cA, P\vz).
\end{equation}

\noindent
A comparison between (\ref{encoding}) and (\ref{encodingq}) shows that during the initialization the candidates are found as solutions of subproblems whereas the problems solved in the iterative part are not in general subproblems of the full problem. 
\\
Let us remark that in the real QA setting the weights do not range over all $\bR$, a physical machine is characterized by bounded ranges so that: $\theta_i\in[-\delta, +\delta]$ and $\theta_{ij}\in[-\gamma, +\gamma]$, for the D-Wave 2000Q $\delta=2$ and $\gamma=1$ \cite{Roberto}. However the D-Wave system has a built-in scaling function  that
automatically scales the problem to make maximum use of the available ranges \cite{headquarters2018d}, so arbitrary real weights can be used as inputs.

\section{Convergence }\label{Conv}

Algorithm~\ref{Alg} has characteristics, including its simulated annealing structure, that permit to reach conclusions about its convergence. In order to prove that it converges to a global optimum of the considered QUBO problem we basically apply a result \cite{tabu} for simulated annealing modeled in terms of Markov processes.
\\
Let us summarize the general result which implies the convergence of our algorithm. Consider a simulated annealing process to find the global minimum of an objective function $F:X\rightarrow \bR$ defined on a finite set $X=\{x_1,...,x_n\}$. Let $A$ be a stochastic $(n\times n)$-matrix\footnote{Namely the elements $\{a_{ij}\}$ of $A$ satisfy: $a_{ij}\geq 0$ $\forall i,j$ and $\sum_{j} a_{ij}=1$ $\forall i$.}, called \emph{generation matrix}, defined as follows: If $x_i$ is the current solution then the element $a_{ij}$ is the probability that $x_j$ is found as a candidate for the next solution. Once a candidate solution is selected an acceptance probability must be defined. Let us assume that if $x_j$ is a candidate to be the successor of $x_i$ then it is accepted as the new current solution with probability:
\beq\label{accept}
\bP_{ij}(T)=\left\{
\begin{array}{cc}
1 &  F(x_j)<F(x_i)\\
e^{-\frac{F(x_j)-F(x_i)}{T}} & \mbox{otherwise}
\end{array}
\right.
\eeq
where $T$ is the decreasing temperature parameter of the simulated annealing. Definition (\ref{accept}) entails that a suboptimal solution has nonzero probability to be accepted and it vanishes as $T\rightarrow 0$.
\\
For $A$ and $\bP_{ij}$ given, the single steps of the process are described by the transition matrix $M(T)$ whose elements
\beq\label{M}
m_{ij}(T):=a_{ij}\cdot \bP_{ij}(T)
\eeq
are the probability that if the current solution is $x_i$ then the next solution will be $x_j$ at temperature $T$. Modeling a simulated annealing process as a Markov chain we can state the following definition.
\begin{definition}
Let $M(T)$ be the transition matrix at temperature $T$ of a simulated annealing process with objective function $F:X\rightarrow\bR$. The probability distribution $\pi_T$ on $X$ satisfying 
\beq
\pi_T(x_j)=\sum_{i=1}^n \pi_T(x_i)m_{ij}(T)\eeq
 is called \textbf{stationary distribution} of $M(T)$.
\end{definition}

\noindent
Assuming the transition matrix $M(T)$ to be irreducible in the sense of Markov chains, i.e. $x_i$ is accessible from $x_j$ for all $i,j$, and it is aperiodic for any $T>0$ then $M(T)$ has a unique stationary distribution $\pi_{T}$ for any $T>0$ and $\lim_{T\rightarrow 0}\pi_T$ exists \cite{Ha02}.\\
 Following \cite{tabu} a simulated annealing scheme can be explicitly realized by means of a number $L$ of decreasing temperature levels $T_1\geq \cdots\geq T_L\simeq 0$ and a number $N$ of iterations at each temperature level. So the statistic produced by many iterations at the temperature level $T_l$ reproduces the distribution $\pi_{T_l}$.
Now let us consider a remarkable result about Markov chains \cite{af1, af2} in the following proposition.
\begin{proposition}
Let $\{M(T)\}_{T>0}$ be a family of transition matrices and $\{T_l\}_{m=1,...,L}$ a set of temperature levels such that $T_1\geq \cdots\geq T_L\simeq 0$.  \\
If the inhomogeneous Markov chain defined by $\{M(T_l)\}_{l=1,...,L}$ converges then its limit distribution corresponds to 
\beq\label{stationary}
\pi^*=\lim_{T\rightarrow 0} \pi_T,
\eeq
where $\pi_T$ is the stationary distribution of $M(T)$ for any $T>0$.
\end{proposition}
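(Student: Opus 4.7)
The plan is to exploit the assumed convergence of the inhomogeneous chain together with the continuity of the family $\{\pi_T\}_{T>0}$ to force the limit distribution to coincide with $\pi^*$. First, for each fixed $T>0$ the transition matrix $M(T)$ inherits irreducibility from $A$ (because $\bP_{ij}(T)>0$ for all $i,j$) and is aperiodic ($m_{ii}(T)>0$, since worsening proposals are rejected with positive probability, keeping the chain at $x_i$); Perron--Frobenius then yields a unique stationary distribution $\pi_T$. Moreover the map $T\mapsto\pi_T$ is continuous on $(0,\infty)$, since the entries of $M(T)$ depend smoothly on $T$ and the Perron eigenvector of an irreducible aperiodic stochastic matrix depends continuously on its entries. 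Existence of $\pi^*=\lim_{T\to 0^+}\pi_T$ was recalled in the excerpt.

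Next, let $\pi^{(l)}$ denote the distribution of the chain after the $N$ iterations at temperature $T_l$, so that $\pi^{(l+1)}=\pi^{(l)}\,M(T_{l+1})^{N}$. Writing $\|\cdot\|$ for the total-variation norm, stochastic matrices act as contractions on probability vectors ($\|\mu A-\nu A\|\le\|\mu-\nu\|$). Using the hypothesis $\pi^{(l)}\to\pi^\infty$ and the triangle inequality,
\begin{align*}
\|\pi^\infty - \pi^\infty M(T_{l+1})^N\|
&\le \|\pi^\infty-\pi^{(l+1)}\| + \|\pi^{(l)}M(T_{l+1})^N-\pi^\infty M(T_{l+1})^N\|\\
&\le \|\pi^\infty-\pi^{(l+1)}\| + \|\pi^{(l)}-\pi^\infty\| \ \xrightarrow[l\to\infty]{}\ 0,
\end{align*}
so $\pi^\infty$ is asymptotically invariant under $M(T_l)^N$ as $T_l\to 0^+$.

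Finally, to conclude $\pi^\infty=\pi^*$ I would compare $\pi^\infty$ with $\pi_{T_l}$ at matched indices. Since $\pi_{T_l}$ is the \emph{unique} fixed point of $M(T_l)^N$ among probability distributions, an approximately invariant distribution must lie close to it; hence the previous display should give $\|\pi^\infty-\pi_{T_l}\|\to 0$, and combining with $\pi_{T_l}\to\pi^*$ yields the claim. The main obstacle is making this last comparison quantitative as $T_l\to 0^+$: the spectral gap of $M(T)^N$ closes in this limit (indeed $M(0^+)$ ceases to be irreducible, with local minima becoming absorbing states), so one cannot use a single uniform contraction estimate. This is precisely the delicate issue handled in \cite{af1,af2}, whose inhomogeneous-Markov-chain analysis balances the vanishing spectral gap against the cooling schedule to secure the identification $\pi^\infty=\pi^*$; I would cite those results to close the argument rather than reprove them.
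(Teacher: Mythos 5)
The paper does not actually prove this proposition: it is stated as a result imported from \cite{af1, af2}, so there is no in-paper argument to compare yours against. Judged on its own, your sketch is structurally sound in its soft parts --- the uniqueness of $\pi_T$ for $T>0$, the total-variation contraction $\|\mu A-\nu A\|\le\|\mu-\nu\|$ for stochastic $A$, and the deduction that the limit $\pi^\infty$ of the inhomogeneous chain satisfies $\|\pi^\infty-\pi^\infty M(T_{l+1})^N\|\to 0$ are all correct. (Minor caveat: your justification of aperiodicity via $m_{ii}(T)>0$ is not automatic --- at a state all of whose neighbors improve $F$, every proposal is accepted --- but the paper simply assumes irreducibility and aperiodicity, so this costs you nothing.)

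The genuine gap is exactly where you say it is, and it is the entire mathematical content of the statement. Asymptotic invariance of $\pi^\infty$ under $M(T_l)^N$ does not identify $\pi^\infty$ with $\pi^*$: passing to the limit only shows $\pi^\infty$ is stationary for $M(0)^N$, and $M(0)$ is no longer irreducible (states with no improving neighbor become absorbing), so it admits many stationary distributions of which $\pi^*$ is merely one. Uniqueness of $\pi_{T_l}$ plus approximate invariance gives $\|\pi^\infty-\pi_{T_l}\|\le \ep_l/(1-\kappa_l)$ only with a contraction coefficient $\kappa_l\to 1$, so the estimate degenerates precisely in the regime you need it. The way this is actually closed in \cite{af1, af2} (and in the standard strong-ergodicity theory, cf.\ \cite{Ha02}) is not by an invariance argument at all but by the criterion: weak ergodicity of the inhomogeneous chain together with summable variation of the stationary distributions, $\sum_l\|\pi_{T_l}-\pi_{T_{l+1}}\|<\infty$, implies strong ergodicity with limit $\lim_l\pi_{T_l}=\pi^*$; note that in that formulation convergence of the chain is a conclusion tied to the cooling schedule, not a free-standing hypothesis as in the proposition's wording. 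So your proposal is an honest partial argument that correctly locates the difficulty, but as a self-contained proof it is incomplete; to finish it you would either have to reproduce the Anily--Federgruen strong-ergodicity estimate or cite it wholesale, which is in effect what the paper itself does.
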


\noindent
As a consequence the annealing process modeled by the chain is successful, i.e. there is the convergence to one of the global optima, when $\{M(T_l)\}_{l=1,...,L}$ converges and $\pi^*$ is nonzero only on the global minima of the objective function. About the convergence of the inhomogeneous Markov chain we have that it converges if the temperature descent is sufficiently slow \cite{af1, af2}, i.e. $N\cdot L\rightarrow +\infty$ (a good choice is considered to be $L\simeq 10$ and $N\simeq 10^3$ \cite{tabu}).\\
In order to check whether $\pi^*$ of the considered simulated annealing process recognizes the global optima we must consider the \emph{neighborhood graph}, i.e. the directed graph $G=(X,\mathfrak E)$ defined by $X$ as the vertex set and $\mathfrak E:=\{(x_i,x_j): a_{ij}>0\}$ where $A=\{a_{ij}\}$ is the generation matrix defined above. The assumption that $M(T)$ is irreducible for every $T>0$ is equivalent to the fact that $G$ is strongly connected\footnote{$G$ is strongly connected if for any pair of vertices $x_i$ and $x_j$ there is a direct path connecting them.}.    

\begin{definition}
The neighborhood graph $G=(X,\mathfrak E)$ is said to be \textbf{weakly reversible} when for every $\tilde F\in\bR$ and any $x_i,x_j\in X$ we have that $x_j$ can be reached from $x_i$ along a direct path $(x_i, y_1,...,y_m,x_j)$ such that $F(y_k)\leq \tilde F$ $\forall k=1,...,m$ if and only if $x_i$ can be reached form $x_j$ along such a path.
\end{definition}

\noindent
Now let us consider a more general model where the generation matrix may be varying with the temperature parameter, indeed we have a family of generation matrices $\{A(T)\}_{T>0}$. This is a remarkable feature for the tabu search applications \cite{tabu} where the generation matrix does not depend only on $T$ but also on the information gained in previous iterations. 
In \cite{tabu0} there is a crucial result to establish the convergence of a generalized simulated annealing algorithm characterized by a family $\{A(T)\}_{T>0}$ of generation matrices.
\begin{theorem}\label{main}
Let $\{A(T)\}_{T>0}$ be a family of generation matrices for a simulated annealing process modeled as a Markov chain within the above scheme.\\
If:\\
\\
i) There exists a $\delta>0$ such that $a_{ij}(T)>0$ implies $a_{ij}(T)\geq \delta$ for $i\not =j$ and $\forall T>0$;
\\
ii) The neighborhood graph $G$ of $A(T)$ does not depend on $T$;
\\
iii) $G$ is weakly reversible;
\\
\\
then $\pi^*(x)>0$ if and only if $x\in X$ is an optimal solution. 
\end{theorem}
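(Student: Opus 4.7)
The plan is to represent the stationary distribution $\pi_T$ explicitly via the Markov chain tree theorem (Kirchhoff/Matrix-Tree) and then perform a low-temperature asymptotic analysis. Concretely, for each $x \in X$ let $\mathcal{T}(x)$ denote the set of spanning arborescences of the neighborhood graph $G$ rooted at $x$ (i.e.\ directed spanning trees with all edges oriented toward $x$). The Markov chain tree theorem gives
\begin{equation*}
\pi_T(x) \;=\; \frac{W_T(x)}{\sum_{y \in X} W_T(y)}, \qquad W_T(x) \;:=\; \sum_{\tau \in \mathcal{T}(x)} \prod_{(i,j) \in \tau} m_{ij}(T),
\end{equation*}
which uses irreducibility (equivalently, the strong connectivity forced by (ii) and (iii)) to guarantee that the denominator is strictly positive.

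Next I would separate the generation part from the acceptance part. Using $m_{ij}(T) = a_{ij}(T)\,\mathbb{P}_{ij}(T)$ and writing $c_{ij} := (F(x_j) - F(x_i))^+$, I would apply hypothesis (i): since whenever $a_{ij}(T) > 0$ one has $a_{ij}(T) \geq \delta$, and since by (ii) the set of edges $\mathfrak{E}$ does not vary with $T$, for every tree $\tau \in \mathcal{T}(x)$ the product $\prod_{(i,j) \in \tau} a_{ij}(T)$ is bounded between $\delta^{n-1}$ and $1$. Consequently
\begin{equation*}
W_T(x) \;\asymp\; \sum_{\tau \in \mathcal{T}(x)} \exp\!\Bigl(-\tfrac{1}{T}\sum_{(i,j)\in\tau} c_{ij}\Bigr),
\end{equation*}
up to multiplicative constants that stay in $[\delta^{n-1},1]$. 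Defining the virtual energy $V(x) := \min_{\tau \in \mathcal{T}(x)} \sum_{(i,j)\in\tau} c_{ij}$, a Laplace-type argument yields $\lim_{T \to 0}\pi_T(x) > 0$ iff $V(x) = \min_{y\in X} V(y)$.

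The final step is the identification $\{x : V(x) = \min_y V(y)\} = \arg\min F$, and this is where weak reversibility does its work and where I expect the main obstacle to lie. The easy direction: if $x^* \in \arg\min F$, then using strong connectivity and weak reversibility I would build an arborescence rooted at $x^*$ in which every edge $(i,j)$ satisfies $F(x_j) \leq F(x_i)$ — i.e.\ a fully downhill tree — so that $V(x^*) = 0$. Construction: for each $y \neq x^*$ one needs a directed path from $y$ to $x^*$ along which $F$ is non-increasing; weak reversibility, applied with $\tilde F = F(y)$ and to a path from $x^*$ to $y$ staying below $\tilde F$ (which exists by strong connectivity after truncation at the sublevel set of $y$), gives the required descending path from $y$ to $x^*$, and these paths can be merged into an arborescence by standard tree surgery. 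Conversely, if $x$ is not a global minimum, then every arborescence rooted at $x$ must contain at least one strictly uphill edge $(i,j)$ with $F(x_j) > F(x_i)$, forcing $V(x) \geq \min_{(i,j)} c_{ij} > 0 = V(x^*)$.

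The delicate combinatorial point — and the main obstacle — is the tree-surgery argument: given the family of descending paths $\{y \leadsto x^*\}_{y \neq x^*}$, one must extract a consistent arborescence without creating cycles. The standard device is to process vertices in decreasing order of $F$, at each step attaching $y$ to the first vertex of its descending path that has already been placed in the tree; weak reversibility guarantees that such a descending path exists for every $y$, and the ordering by $F$ prevents cycles. Once this is established, combining it with the Laplace asymptotics above yields $\pi^*(x) > 0 \iff x \in \arg\min F$, as claimed.
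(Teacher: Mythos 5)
First, a point of comparison: the paper does not prove this theorem at all --- it is imported verbatim from Faigle--Kern \cite{tabu0} and then applied. So your attempt is measured against the standard literature proof, and your overall route (Markov chain tree theorem, separation of generation and acceptance factors via hypotheses (i)--(ii), virtual energy $V(x)=\min_{\tau\in\mathcal{T}(x)}\sum_{(i,j)\in\tau}(F(x_j)-F(x_i))^+$, Laplace asymptotics) is indeed the standard and correct skeleton. You also correctly identify that the entire weight of the theorem rests on the combinatorial identification $\arg\min V=\arg\min F$.

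However, your proof of that key lemma has a genuine gap: the ``easy direction'' is false as stated. The claim $V(x^*)=0$ for a global minimum $x^*$ fails whenever $F$ possesses a strict local minimum $y$ that is not global --- the generic situation this theorem is designed to handle. Every spanning arborescence rooted at $x^*$ must contain the out-edge of $y$, and every out-edge of a strict local minimum is strictly uphill, so $V(x^*)\geq\min\{F(z)-F(y):(y,z)\in\mathfrak{E}\}>0$. Correspondingly, the fully descending path $y\leadsto x^*$ your construction requires simply does not exist (the very first step out of $y$ increases $F$), and weak reversibility at level $\tilde F=F(y)$ cannot manufacture it: the hypothesis is an ``if and only if'', so it only returns a path from $y$ to $x^*$ at height $\leq F(y)$ when a path from $x^*$ to $y$ at that height already exists, which it does not when $y$ sits in a basin above level $F(y)$. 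Your appeal to ``strong connectivity after truncation at the sublevel set of $y$'' is exactly the unjustified step: strong connectivity of $G$ does not pass to sublevel sets. With $V(x^*)=0$ gone, the concluding inequality $V(x)\geq\min_{(i,j)}c_{ij}>0=V(x^*)$ collapses, and note that the left-hand bound is also too weak on its own, since one must show $V(x)>V(x^*)$, not merely $V(x)>0$. The repair is a direct comparison of $V(x)$ and $V(x^*)$: take an optimal arborescence $\tau$ rooted at a non-optimal $x$, locate the tree-path from $x^*$ to $x$ in $\tau$, and use weak reversibility at the level of the maximum of $F$ along that path to replace it by a reversed path realizable in $G$, yielding an arborescence rooted at $x^*$ of strictly smaller cost. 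That surgery, not the downhill-tree construction, is the actual content of the cited result, and it is the piece your proposal is missing.
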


\noindent
We can apply Theorem \ref{main} to prove the convergence of Algorithm \ref{Alg}, in fact our iterative scheme can be considered a simulated annealing process where the role of temperature parameter is played by the probability $p$ by means of the relation $T=-\ln^{-1}p$. Moreover the search realized by Algorithm \ref{Alg} presents the structure of temperature levels with many runs at each level, in fact the parameter $p$ is constant for $N$ iterations. These cycles provide the convergence of the iterative search modeled as an inhomogeneous Markov chain.

\begin{proposition}\label{convergenza}
Algorithm~\ref{Alg} converges to one of the global minima of $x^TQx$.
\end{proposition}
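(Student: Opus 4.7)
I would prove the statement by casting Algorithm~\ref{Alg} as a generalized simulated annealing on $X = \{-1,1\}^n$, with the current best candidate $\vz^*$ as the state and with the correspondence $T = -\ln^{-1}p$ between the algorithm's parameter $p$ and the temperature $T$, and then applying Theorem~\ref{main} to conclude that the limit distribution concentrates on the global minima of $f_Q(\vz) = \vz^T Q \vz$. The transition matrix of this chain is identified from Algorithm~\ref{Alg} as follows: the candidate $\vz'$ produced at lines~23--27 has a conditional distribution depending on $p$, on the best stored permutation $P^*$ and on the tabu matrix $S$, which determines the generation matrix $A(p)$; the acceptance rule at lines~31--35 replaces $\vz^*$ by $\vz'$ deterministically when $f_Q(\vz')<f_Q(\vz^*)$ and with probability $p^{f_Q(\vz')-f_Q(\vz^*)} = e^{-(f_Q(\vz')-f_Q(\vz^*))/T}$ otherwise, reproducing (\ref{accept}) and (\ref{M}).

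Next I would verify the three hypotheses of Theorem~\ref{main}. The key ingredient is the candidate perturbation $h(\vz',p)$ of Algorithm~\ref{Alg4}, invoked with probability $q>0$ at line~27, which flips each of the $n$ coordinates independently with probability $p$. Conditional on this perturbation, from any starting configuration the probability of generating a specific target $\vz_j$ is $p^k(1-p)^{n-k} \geq \min(p,1-p)^n$, where $k$ is the Hamming distance between the annealer output and $\vz_j$. Since $p$ is held in $[p_\delta,\,1-(1-p_\delta)\eta]$ after the first decrement at line~21, with $0<p_\delta<1/2$, this yields a uniform bound $a_{ij}(p) \geq q \cdot \min(p_\delta,\,(1-p_\delta)\eta)^n > 0$ whenever $a_{ij}(p)>0$, establishing (i); positivity of every off-diagonal entry then makes the neighborhood graph $G$ the complete directed graph on $X$, which is independent of $T$ and yields (ii). Condition (iii) is automatic because in a complete graph every pair of vertices is joined by a direct edge with no intermediate vertex, so the constraint $F(y_k)\leq\tilde F$ is vacuous and weak reversibility reduces to the symmetry of the adjacency relation.

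Finally I would combine these conditions with the cooling schedule. The update $p \gets p-(p-p_\delta)\eta$ performed every $N$ iterations at line~21 realizes the multi-level temperature structure required by the preceding proposition on inhomogeneous Markov chains, with $N$ iterations at each level. By Theorem~\ref{main}, $\pi^* = \lim_{T\to 0}\pi_T$ is supported exactly on the global minima of $f_Q$, and by that proposition the inhomogeneous chain converges to $\pi^*$ as $N\cdot L \to +\infty$ and $p_\delta \to 0$, yielding convergence of $\vz^*$ to a global optimum.

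The main obstacle I anticipate is the delicate verification of condition~(i) at the boundary values of $p$: at $p=1$ the perturbation becomes the deterministic negation $\vz\mapsto-\vz$ and the neighborhood graph collapses, so the initial phase (before $p$ is first decreased) must be excluded from the invocation of Theorem~\ref{main} or alternatively justified through the intrinsic randomness of the annealer output. A related subtlety is that the algorithm keeps $p\geq p_\delta>0$, so the final temperature remains strictly positive; the claim must therefore be understood either as convergence of the stationary distribution at the lowest attainable temperature as $p_\delta\to 0$, or by refining the schedule so as to let $p_\delta$ vanish during the run.
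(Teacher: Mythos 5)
Your proof is correct and follows essentially the same route as the paper's: model the iteration as a generalized simulated annealing with $T=-\ln^{-1}p$, use the perturbation $h(\vz',p)$ of Algorithm~\ref{Alg4} invoked with probability $q$ to lower-bound every off-diagonal generation probability (the paper's bound is $q\,(1-p)^{s}p^{n-s}\geq q(\eta p_\delta)^n$, yours is the equivalent $q\min(p_\delta,(1-p_\delta)\eta)^n$), deduce that the neighborhood graph is complete and hence $T$-independent and weakly reversible, and invoke Theorem~\ref{main}. In fact you are somewhat more careful than the paper on two points it passes over silently --- the degeneracy of the perturbation at $p=1$ before the first decrement, and the fact that $p$ is bounded below by $p_\delta>0$ so the limit $T\to 0$ requires $p_\delta\to 0$ --- so no correction is needed.
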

\begin{proof} 
To prove the statement we show that 1) the proposed algorithm presents the structure described above and 2) it verifies the hypotheses of Theorem \ref{main}. \\
1) First of all let us recall that the generation of a new candidate solution, with a corresponding generation matrix, is described from line 19 to line 27. Moreover let us  check that the acceptation probability is of the form (\ref{accept}), i.e. if $\vz^*$ is the current solution then the probability that $\vz'$ is accepted as the new current solution  has the following form:
\beq\label{ap}
\bP(\vz^*,\vz',T) :=\left\{\begin{array}{cc}
1 & f'<f^*\\
e^{-\frac{f'-f^*}{T}} & \mbox{otherwise}
\end{array}\right. 
 \eeq
where $f'=f(\vz')$ and $f^*=f(\vz^*)$. The acceptation probability of our search corresponds to (\ref{ap}) by posing $T=-{\ln}^{-1} p$ as shown from line 30 to line 36 of the algorithm.  The temperature lowering of this simulated annealing process is described by line 21.\\
For a fixed value of $p$ we can identify a single step of the iterative search from line 19 to line 41 that can be described by a transition matrix $M(p)$. The \emph{repeat-until}, and the \emph{if} statement of lines 20-22 realize the structure of temperature levels, so the corresponding inhomogeneous Markov chain converges as $i_{max}\rightarrow +\infty$. 
\\
2) Let us initially focus on hypothesis \emph{iii)} of Theorem \ref{main}. It suffices to show that if a solution was reached from another solution then it is also possible to do the contrary because this implies weak reversibility. By denoting with $s$ the number of components that are equal between the two solutions, the probability to generate again the previous solution is $q\cdot (1-p)^{s} \cdot p^{n-s}$ as an effect of line 27 and Algorithm~\ref{Alg4}. Since $p_\delta<0.5$, we have that $q\cdot (1-p)^{s} \cdot p^{n-s}\geq q\cdot (\eta p_\delta)^n$.  
 This last observation leads to see that also hypothesis \emph{i)} holds, in fact $q\cdot (\eta p_\delta)^n$ corresponds to $\delta>0$ of Theorem \ref{main}, for $q$, $\eta$ and $p_\delta$ are strictly positive. 
\\
The above observation holds indeed for any pair of solutions. Consequently hypothesis \emph{ii)} is also satisfied because the neighborhood graph associated to the generation is complete for any $p$, in fact from any current solution $\vz^*$ there is a probability larger than $q\cdot (\eta p_\delta)^n$ to reach any other solution.  
\end{proof}

\noindent
Let us make apparent how an aspect of the proposed Quantum Annealing Learning Search does not have any consequence on its convergence. Describing Algorithm 1 and Algorithm 2 we have not assumed that the quantum annealer always returns the optimal solution to the annealer energy function (\ref{argmin_zE}) in favor of the realistic assumption that the annealer returns a sample of solutions which may or not may contain an optimal one (\ref{hatargmin_zE}). However, even if the estimation of the ground state solution in (\ref{hatargmin_zE}) and used in lines 8 ad 26 of Algorithm~\ref{Alg} was perfect, namely it coincided to (\ref{argmin_zE}), the algorithm converges anyway. In fact there is always a non-null probability that a non-optimal solution is accepted (Algorithm~\ref{Alg}, line 35).
On the other hand, the algorithm converges also when the quantum machine returns a poor estimation of a global minimum of the annealer energy function most of the time. In this case the procedure reduces to a classical simulated annealing with random generation of the candidate solutions whose convergence is guaranteed \cite{tabu}.
\\
Note  that Proposition \ref{convergenza} holds for any QUBO problem, including also those whose solution does not contain solutions of subproblems, namely a violation of the so-called \emph{optimal-substructure property}. 
 In fact our algorithm does not decompose the problem in subproblems but solves a series of problems defined by the additive interaction between the weights that carry information about $f$ and the tabu matrix. In general this series is not composed by subproblems of the full problem consistently with the possible absence of the optimal-substructure property.  \\
A crucial aspect for convergence of Algorithm~\ref{Alg} is the probabilistic behavior of the scheme. Let us point out where quantum and classical probabilities intervene in the iterative part of the algorithm: 1) Quantum probability in the output of the annealer (Algorithm~\ref{Alg}, lines 25 and 26); 2) Classical probability $p$ of permutation modification (line 23); 3) Classical probability $q$ of solution perturbation (line 27); 4) Classical probability $p^{f'-f^*}$ of acceptance of suboptimal solutions (line 35). The probabilities 1-3) characterize the process of solution generation whereas 4) influences the solution acceptance probability. 
As shown in the proof of Proposition \ref{convergenza}, classical probabilities $p$ and $q$ allow for a nonzero probability to reach any candidate solution, this fact is the keystone of the convergence argument. 

\noindent
The proof of Proposition \ref{convergenza} makes evident that our algorithm presents the structure of a simulated annealing that delegates part of the search to a quantum annealer. However the quantum annealer does not search purely on a representation of the objective function. In fact we stress that the tabu-implementing encoding produces its effects in the quantum part of the scheme. Thus the presented algorithm cannot be reduced to a simple wrapping of quantum annealer calls in a simulated annealing procedure. This implies that the power of the quantum annealing acts on a mixed representation of the objective function and of the current state of the search that includes the bad elements arranged in the tabu matrix. This circumstance may give computational advantages that must be further analytically and empirically investigated.  
\\
Let us clarify that the given proof ensures the convergence asymptotically as the number of iterations goes to infinity. Empirical evaluation is needed to elucidate the actual number of iterations required in practice for each class of problems. 
It is interesting to note that Proposition \ref{convergenza} also shows a constructive way of finding values of parameters $\bTheta$ that encode the very same minima of the QUBO problem. This means that it would be possible for a class of QUBO problems to run several times the algorithm on different instances and learn the mapping into the annealer architecture from examples. In this way the learned mapping could be used to initialize other instances of the same class of problems decreasing the required number of iterations.

\section{Related work}

Tran et al.~\cite{tran2016hybrid} propose a hybrid quantum-classical approach that includes a quantum annealer. In their proposal the classical part of the algorithm maintains a search tree and decides which of two solvers, a quantum annealer and classical one, to call on a subproblem.  Notably they implemented their approach on a D-Wave quantum annealer and their goal is to realize a complete search. Earlier proposals of quantum-classical schemes include the work of Rosenberg et al. \cite{Rosenberg2016} who proposed to decompose a big problem in subproblems that can be solved by the annealer and validated their approach using simulated annealing. More recently Abbott et al. \cite{DBLP:journals/corr/abs-1803-04340} suggested that the search for the embedding of the problem into the annealer topology can be critical in the quest to achieve a quantum speed-up for an hybrid quantum-classical algorithm. The D-Wave handbook \cite{headquarters2018d} reports at page 74 that their \emph{QSage} package implements a hybrid tabu search algorithm for problem decomposition for the annealer. Moreover, D-Wave also released the software \emph{qbsolv} that allows for a decomposition of large QUBO problems.
\\
Our approach differs from what previously published in two fundamental ways: 1) we inserted the tabu mechanism inside the annealer stage by means of the matrix $S$ and 2) we provide a proof of convergence by mapping the classical part to results about tabu/SA procedures.

\section{Conclusion and future work}

We have presented a general scheme for the application of a nonstandard variation of tabu search to define a learning search, a specific instance of a hybrid quantum-classical algorithm, and proved its convergence. \\
The general strategy of the proposed algorithm is to implement a search by  defining and updating a tabu matrix that deforms the weights of the annealer energy function. Moreover the representation of the objective function into the annealer graph evolves during the search depending on the current solution and the tabu matrix. In Algorithm~\ref{Alg}, the evolution of the encoding is determined by the generation of permutations depending on a probability that can be interpreted as the decreasing temperature of a simulated annealing process. Applying some results about convergence of SA processes modeled by Markov chains, we proved the convergence of Algorithm~\ref{Alg} to a solution of the QUBO problem.\\
The main direction for further investigation is the estimation of the run time of the presented algorithm (Algorithm~\ref{Alg}) and its empirical validation. Let us point out that if we choose to construct $10$ temperature levels with $10^3$ iterations for each level, we need to call the annealer $k\cdot10^4$ times, where $k$ is the number of annealer calls per iteration, with an annealing time for a single anneal of $20\mu s$ (D-Wave 2X \cite{tran2016hybrid}). 
 However our algorithm improves the representation of the objective function into the annealer during the search, thus we can conjecture a gain in the number of annealer calls and a consequent speed-up of the optimization.

\section*{Acknowledgements}
 
The present work is supported by:

\vspace{-0.2cm}

\begin{center}
 \includegraphics[width=4cm]{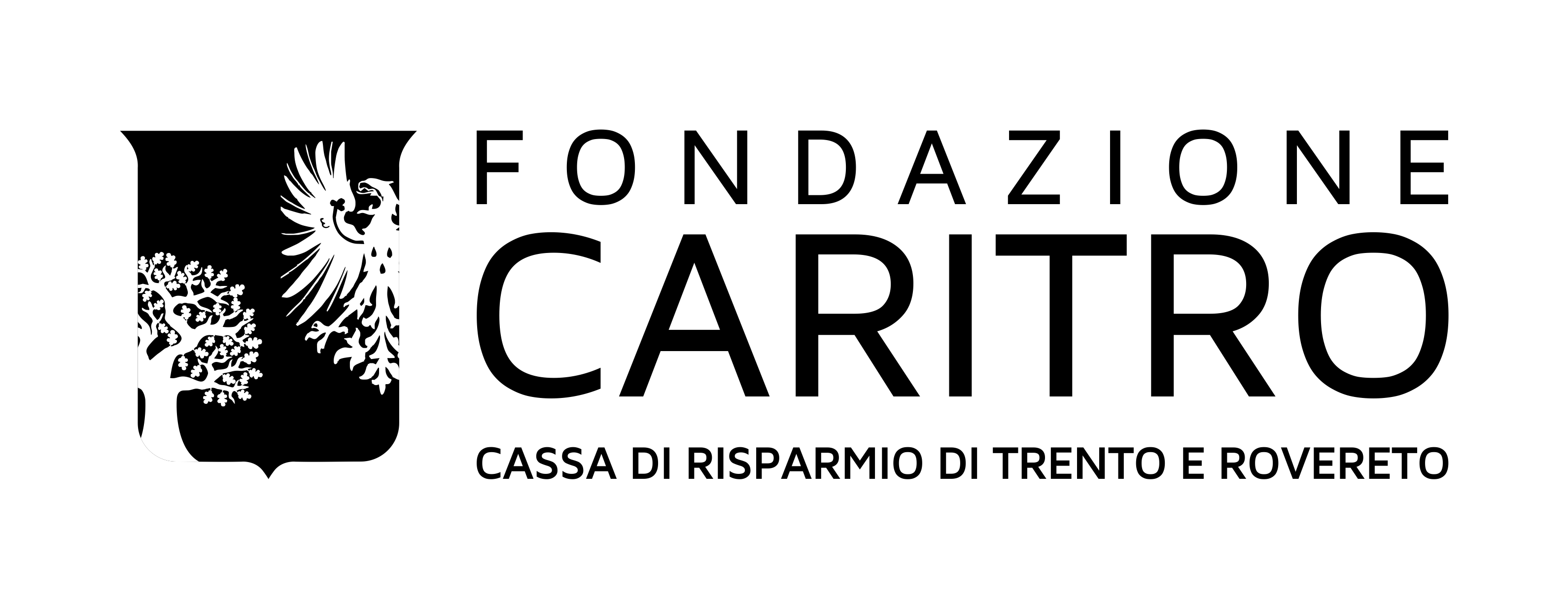}
\end{center}


\vspace{-1cm}

\end{document}